\newcommand{\tmop}[1]{\ensuremath{\operatorname{#1}}}
\newcommand{\tmstrong}[1]{\textbf{#1}}
\newcommand{\tmtextbf}[1]{{\bfseries{#1}}}
\newcommand{\tmverbatim}[1]{{\ttfamily{#1}}}
\newenvironment{enumerateroman}{\begin{enumerate}[i.] }{\end{enumerate}}
\newtheorem{definition}{Definition}
\newtheorem{proposition}{Proposition}
\newcommand{\HermSDP}[1]{\ensuremath{\mathsf{Herm}_+ \!\left( #1 \right)}}
\newcommand{\Sep}[2]{\mathsf{Sep} \left( #1 : #2 \right)}
\newcommand{\tmH}{\mathcal{H}}
\newcommand{\HA}{\mathcal{H}_{\text{A}}}
\newcommand{\HB}{\mathcal{H}_{\text{B}}}
\newcommand{\HX}{\mathcal{H}_{\text{X}}}
\newcommand{\HY}{\mathcal{H}_{\text{Y}}}
\newcommand{\HC}{\mathcal{H}_{\text{C}}}
\newcommand{\HD}{\mathcal{H}_{\text{D}}}
\newcommand{\dA}{d_{\text{A}}}
\newcommand{\dB}{d_{\text{B}}}
\newcommand{\dX}{d_{\text{X}}}
\newcommand{\dY}{d_{\text{Y}}}
\newcommand{\nX}{n_{\text{X}}}
\newcommand{\nY}{n_{\text{Y}}}
\newcommand{\nA}{n_{\text{A}}}
\newcommand{\nB}{n_{\text{B}}}
\newcommand{\rhoAB}{\rho_{\text{AB}}}
\newcommand{\PQ}{P_{\text{Q}}}
\newcommand{\vecPQ}{\vec{P}_{\text{Q}}}
\newcommand{\PSR}{P_{\text{SR}}}
\newcommand{\ket}[1]{\left| #1 \right\rangle}
\newcommand{\ketbra}[1]{\left| #1 \right\rangle \left\langle #1 \right|}
\begin{document}

\title{Practical measurement-device-independent entanglement quantification}
\author{Denis Rosset}
\affiliation{Department of Physics, National Cheng Kung University, Tainan 701, Taiwan}
\affiliation{Perimeter Institute for Theoretical Physics, Waterloo, Ontario, Canada, N2L 2Y5}
\author{Anthony Martin}
\affiliation{Group of Applied Physics, Universit\'e de Gen\`eve, 1211 Gen\`eve, Switzerland}
\author{Ephanielle Verbanis}
\affiliation{Group of Applied Physics, Universit\'e de Gen\`eve, 1211 Gen\`eve, Switzerland}
\author{Charles Ci Wen Lim}
\affiliation{Department of Electrical and Computer Engineering, National University of Singapore, 4 Engineering Drive 3, Singapore 117583}
\author{Rob Thew}
\affiliation{Group of Applied Physics, Universit\'e de Gen\`eve, 1211 Gen\`eve, Switzerland}


\begin{abstract}
  The robust estimation of entanglement is key to the validation of implementations of quantum systems.
  On the one hand, the evaluation of standard entanglement measures, either using quantum tomography or using quantitative entanglement witnesses requires perfect implementation of measurements.
  On the other hand, measurement-device-independent entanglement witnesses (MDIEWs) can certify entanglement of all entangled states using untrusted measurement devices.
  We show that MDIEWs can be used as well to \textit{quantify} entanglement according to standard entanglement measures, and present a practical method to derive such witnesses using experimental data only.
\end{abstract}

\maketitle

Entanglement is a defining feature of quantum theory; entangled quantum systems have an advantage over classical systems in various contexts~{\cite{Horodecki2009}}, among which key distribution~{\cite{Ekert1991}}, quantum computation~{\cite{Jozsa2003}}.
Entanglement exists in various kinds; correspondingly, there exists a variety of entanglement measures to quantify it, even in the simplest case of bipartite quantum systems.
All these measures satisfy the axioms detailed in~{\cite{Horodecki2001}}, mainly that entanglement does not increase under local operations coordinated by classical communication.

Some of these measures have an operational interpretation, such as the distillable entanglement~{\cite{Rains1999a}}, the entanglement cost~{\cite{Hayden2001}}, or the entanglement of formation~{\cite{Wootters1998}}; others can be interpreted geometrically as a distance with respect to the set of separable states, among which are the relative entropy of entanglement~{\cite{Vedral2002}} or the robustness of entanglement~{\cite{Vidal1999,Steiner2003}}; finally, some measures are particularly easy to compute, such as the negativity~{\cite{Vidal2002}} or the recent semidefinite upper bound on distillable entanglement~{\cite{Wang2016}}; a summary is provided in the reviews~\cite{Plenio2007,Eltschka2014}.
Note that the ease of computation is paramount when characterizing experimental realizations of quantum states; some of the entanglement measures can only be evaluated for highly symmetric states, or are NP-hard to compute~\cite{Huang2014}.

In practice, how do we quantify entanglement? Given an unknown bipartite quantum state, we can perform quantum tomography by conducting local measurements, reconstruct the density matrix and then evaluate any of the computable entanglement measures.
While conceptually simple, this procedure suffers from two drawbacks.
First, the reconstruction of a physical state by point-like estimators is always biased and can lead to overestimation of entanglement~{\cite{Schwemmer2015}}.
Second, imperfections in the measurement devices affect entanglement quantification, and also leads to false positives~{\cite{Rosset2012a}}.
The first drawback can be reduced by using proper statistical testing or by linear evaluation~{\cite{Schwemmer2015}}; in the context of entanglement measures, this corresponds to the use of quantitative entanglement witnesses~{\cite{Eisert2007}}.
The second drawback motivated the creation of device-independent and measurement-device-independent methods.

Device-independent methods are built on the following observation.
When the measurements on a bipartite quantum state are space-like separated, the presence of nonlocal correlations certifies the presence of entanglement, and the presence of nonlocal correlations can be tested by using Bell inequalities~{\cite{Bell1964,Brunner2014}}.
Entanglement can even be quantified purely from the correlations~{\cite{Moroder2013}}.
However, device-independent methods have inherent limitations: they do not tolerate arbitrary losses or detection inefficiencies~{\cite{Brunner2014}} and only entangled states with nonlocality can be detected~{\cite{Werner1989,Degorre2005,Barrett2002}}.

Measurement-device-independent entanglement witnesses~{\cite{Branciard2013}} (MDIEWs), based on the semiquantum games introduced in~{\cite{Buscemi2012}}, use a different set of assumptions.
While the measurements need no longer be performed in a space-like separated manner~{\cite{Rosset2013a}}, the measurement devices are driven by quantum inputs whose preparation is trusted.
These witnesses can certify the presence of entanglement in all entangled states, and have been experimentally verified~{\cite{Xu2014,Nawareg2015,Verbanis2016}}.

The maximal payoff of a MDIEW quantifies entanglement.
In~\cite{Shahandeh2017}, it was shown that a single MDIEW is sufficient to quantify negative-partial-transposition (NPT) entanglement.
The measure thus defined has similar properties to the negativity~{\cite{Vidal2002}}, but is not equivalent to it.

However, the question of quantifying entanglement in a measurement-device-independent manner using standard entanglement measures is still open.
In the present work, we generalize the method introduced in our previous Letter~{\cite{Verbanis2016}} by constructing tailored MDIEWs that quantify entanglement.
In contrast with earlier MDIEW constructions~{\cite{Branciard2013,Rosset2013a,Shahandeh2017}}, our approach work directly on the correlations without requiring prior knowledge about the experimental setup, and is robust against changes in the measurement basis or relabelings of measurement outcomes.
Our method is fully general as the resulting MDIEWs can be constructed to provide lower bounds on any convex entanglement measure; moreover, the bound is tight when the measurement devices implement a generalized Bell measurement.
We implement the computations using conic linear programs, which can be efficiently handled by off-the-shelf interior-point solvers.

To demonstrate the general applicability of our method, we use the experimental data collected during our earlier experiment~{\cite{Verbanis2016}}, and compute lower bounds on the negativity, the absolute, random and generalized robustness of entanglement, and the semidefinite upper bound on distillable entanglement due to Wang and Duan~{\cite{Wang2016}}.

\begin{figure*}
  \includegraphics{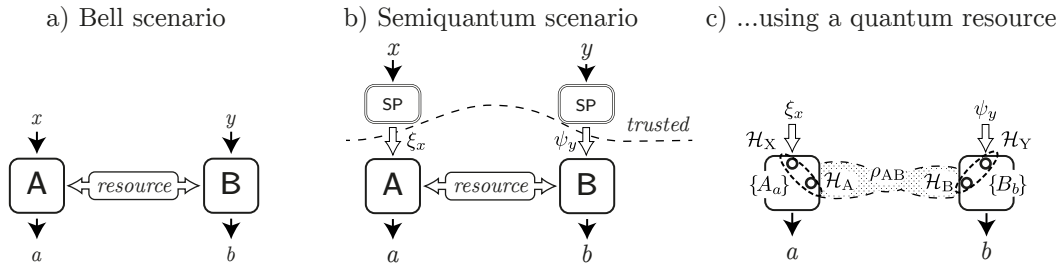}
  \caption{
    \label{Fig:SemiquantumScenario}
    In a), Bell scenarios involve devices that receive classical inputs $x, y$ and give classical outputs $a, b$ after eventual use of a bipartite resource.
    In b), semiquantum scenarios involve trusted state preparation devices (SP), where the indices $x, y$ produce the quantum states $\xi_x$, $\psi_y$ provided to the devices as A and B as quantum inputs; the outputs $a, b$ are still classical.
    In c), an example of a semiquantum scenario where the devices A and B share a quantum state $\rhoAB$, measured jointly with the quantum inputs.}
\end{figure*}

Our paper is structured as follows.
In Section~\ref{Sec:DefinitionsMainClaims}, we recall the main definitions used in the construction of measurement-device-independent entanglement witnesses and define formally their quantitative variant.
In Section~\ref{Sec:Characterizing}, we compute a lower bound on the entanglement of a bipartite state using untrusted measurement devices, with a simple method applicable to tomographically complete sets of quantum inputs.
We also describe optimal measurements that lead to a tight bound.
In Section~\ref{Sec:Quantifying}, we formulate our method as a conic linear program, with two advantages: we relax the experimentally demanding requirement of tomographically complete sets of inputs, and we show how to extract quantitative entanglement witnesses from our formulation.
Finally, we address in Section~\ref{Sec:Experimental} the implementations issues that arise when dealing with losses and noisy experimental data.

\section{Definitions and main claim}
\label{Sec:DefinitionsMainClaims}

The semiquantum scenarios introduced by Buscemi~{\cite{Buscemi2012}} generalize the usual Bell scenarios\footnote{Any Bell scenario can be transformed in a semiquantum scenario by encoding the classical inputs into orthogonal quantum input states.} (see \figurename{~\ref{Fig:SemiquantumScenario}a~\&~\ref{Fig:SemiquantumScenario}b}).
In (bipartite) Bell scenarios, the devices A and B receive measurement settings described by classical inputs $x$ and $y$.
The measurement outcomes $a$ and $b$ are also classical.
In semiquantum scenarios, the devices receive quantum states $\xi_x$ and $\psi_y$ as inputs, taken from the indexed sets $\{ \xi_x \}$ and $\{ \psi_y \}$, but the measurement outcomes are still classical.

In both Bell and semiquantum scenarios, the indices $x$ and $y$ are chosen at random.
In semiquantum scenarios, however, we prepare the state $\xi_x$ (resp. $\psi_y$) from the index $x$ ($y$), and send it to the measurement device.
This state preparation is trusted: we assume that $\xi_x$ ($\psi_y$) is prepared exactly as specified (i.e. a single copy in the proper basis with the prescribed quantum dimension), and that the measurement device receives only the quantum input state $\xi_x$ ($\psi_y$) without the index $x$ ($y$).
The measurement devices A and B process these input states, possibly using a shared resource, and output $a = 1 \ldots \nA$ and $b = 1 \ldots \nB$, respectively. We describe the behavior of the devices by the joint probability distribution $P (a b | x y)$.

Formally, we describe a semiquantum scenario $\mathcal{S}$ by the sets of quantum inputs $\{ \xi_x \}$, $\{ \psi_y \}$ and the number $\nA$, $\nB$ of measurement outcomes:
\begin{equation}
  \label{Eq:SemiquantumScenario}
  \mathcal{S} \equiv \left( \{ \xi_x \}, \{ \psi_y \}, \nA, \nB \right),
\end{equation}
where the input states are described by density matrices $\xi_x \in \mathsf{Herm}_+ \left( \HX \right)$ and $\psi_y \in \mathsf{Herm}_+ \left( \HY \right)$, with $\tmop{tr} [\xi_x] = \tmop{tr} [\psi_y] = 1$ for $x = 1 \ldots \nX$ and $y = 1 \ldots \nY$.
We write $\mathsf{Herm}_+ \left( \tmH \right)$ the set of positive semidefinite Hermitian operators (i.e. with nonnegative eigenvalues).
In the study that follows, we assume a fixed scenario $\mathcal{S}$ (for example, $\{ \xi_x \}$ and $\{ \psi_y \}$ are the six eigenvalues of the Pauli operators $| \pm x \rangle, | \pm y \rangle, | \pm z \rangle$ and $\nA = \nB = 4$).

\subsection{Resources in a semiquantum scenario}

In the given semiquantum scenario $\mathcal{S}$, we allow the devices A and B to access a particular type of resource and describe the behaviors they can exhibit.
We first allow the devices A and B to share a quantum state $\rhoAB \in \mathsf{Herm}_+ \left( \HA \otimes \HB \right)$, as in \figurename{~\ref{Fig:SemiquantumScenario}c}.
The measurement devices can perform a joint measurement on their part of $\rhoAB$ and their quantum input.
The device A performs a joint measurement on $\HX \otimes \HA$ described by the POVM $\{ A_a \}$, while B performs a joint measurement on $\HB \otimes \HY$ described by the POVM $\{ B_b \}$:
\begin{equation}
 \label{Eq:PQ} \PQ (a b | x y) = \tmop{tr} \left[ (A_a \otimes B_b) \left( \xi_x \otimes \rhoAB \otimes \psi_y \right) \right] .
\end{equation}

In the semiquantum setting, we do not restrict the dimension of $\rhoAB$ and make no particular assumptions about the measurements $\{ A_a \}$ and $\{ B_b \}$.
However, when A and B have only access to shared randomness, we obtain
\begin{equation}\label{Eq:PSR} 
	\PSR (a b | x y) = \sum_{\lambda} p_{\lambda}
	\tmop{tr} 
	\left[ \Pi_{a | \lambda}^{\text{A}} \xi_x \right] \tmop{tr} \left[ \Pi_{b | \lambda}^{\text{B}} \psi_y \right],
\end{equation}
where the local hidden variable $\lambda$, distributed according to $p_\lambda$, selects local measurement operators $\Pi_{a | \lambda}^{\text{A}}$ and $\Pi_{b | \lambda}^{\text{B}}$.
As separable states can be created from shared randomness, Eq.~(\ref{Eq:PSR}) also covers the case where A and B share a separable state $\rhoAB \in \Sep{\HA}{\HB}$ --- we write $\Sep{\HA}{\HB}$ for the cone of separable operators in $\HA \otimes \HB$, see Appendix~\ref{App:RecipesSep}.

In addition to shared randomness, we can allow A and B to communicate classical information, and write the resulting correlations $P_{\text{LOCC}} (a b | x y)$.
While the mathematical characterization of the LOCC correlations is complicated~{\cite{Chitambar2014}}, a clever use of entanglement measures will allow us to avoid the problem in Section~\ref{Sec:Characterizing}.

\subsection{Measurement-device-independent entanglement witnesses}

Following~{\cite{Branciard2013}} (albeit with opposite sign convention), we define {\textit{measurement-device-independent entanglement witnesses}} (MDIEWs) by coefficients $\beta_{a b x y}$, so that the expectation value 
\begin{equation}
 \label{Eq:ExpectationValue} I (\vec{P}) \equiv \sum_{a b x y} \beta_{a b x y} P (a b | x y)
\end{equation}
obeys the following requirements:
\begin{enumerateroman}
 \item If A and B only have access to local operations and classical communication, then $I \left( \vec{P}_{\text{LOCC}} \right) \leqslant 0$.
 \item We have $I \left( \vecPQ \right) > 0$ when A and B share a particular entangled state $\rhoAB$ and perform specific joint measurements.
\end{enumerateroman}
In the definition above, we wrote $\vec{P} \in \mathbbm{R}^{\nA \nB \nX \nY}$ as a shorthand for $P (a b | x y)$ by an enumeration of its coefficients.
The robustness of our MDIEW rests on the relation $I \left( \vec{P}_{\text{LOCC}} \right) \leqslant 0$, satisfied for any measurement strategy implemented by the devices, even allowing classical communication~{\cite{Rosset2013a}}.

\subsection{Entanglement measures}
\label{Sec:EntanglementMeasures}
We now construct quantitative MDIEWs.

\begin{definition}
  \label{Def:QuantitativeMDIEW}
 Given entanglement measure $\mathcal{E}$, a measurement-device-independent entanglement witness (MDIEW) $\beta_{a b x y}$ is a quantitative MDIEW when its expectation value~(\ref{Eq:ExpectationValue}) provides a lower bound on the entanglement of the state $\rhoAB$ shared by the devices:
 \begin{equation}
   I (\vec{P}) \leqslant \mathcal{E} \left( \rhoAB \right) \;,
 \end{equation}
 for all states $\rhoAB$ and measurements $\{A_a\}$, $\{B_b\}$.
\end{definition}

Our construction applies to any entanglement measure $\mathcal{E}$ that satisfies the following axioms:
\begin{enumerateroman}
 \item The entanglement measured by $\mathcal{E}$ cannot increase under LOCC operations (axioms M1 and M2 of~{\cite{Horodecki2001}}).
 \item $\mathcal{E}$ is convex (axiom M3.b of~{\cite{Horodecki2001}}).
 \item $\mathcal{E}$ is dimension independent, that is, embedding $\rhoAB$ in a higher-dimensional Hilbert space does not change the amount of entanglement; for separable $\sigma_{\text{A}' \text{B}'} \in \Sep{\tmH_{\text{A}'}}{\tmH_{\text{B}'}}$:
 \begin{equation} \label{Eq:AxiomDimensionIndependence} 
 \mathcal{E} \left( \rhoAB \right) =\mathcal{E} \left( \rhoAB \otimes \sigma_{\text{A}' \text{B}'} \right).
\end{equation}
\end{enumerateroman}
The second requirement allows the use of convex solvers to compute entanglement.
The third requirement allows the interpretation of $\mathcal{E}$ as an entanglement measure when the dimension of $\rhoAB$ is unknown.
We list below common entanglement measures and the axioms they satisfy.

\begin{center}
  \begin{tabular}{|l|l|l|l|l|}
    \hline
    & i.  & ii. & iii.  \\
Negativity~{\cite{Vidal2002}}        & \checkmark & \checkmark & \checkmark   \\
Absolute robustness~{\cite{Vidal1999}}      & \checkmark & \checkmark & \checkmark     \\
Generalized robustness~{\cite{Steiner2003}}   & \checkmark & \checkmark & \checkmark  \\
Random robustness~{\cite{Vidal1999}}        & \checkmark & \checkmark & \ding{55} \\
    Upper bound on distillable ent.~{\cite{Wang2016}} & \checkmark & \checkmark & ? \\
    \hline
  \end{tabular}
\end{center}

To simplify our presentation, we require $\mathcal{E}$ to be invariant under global transposition~\footnote{We leave the following puzzle to the reader.
  Is there any known entanglement measure {\textit{not}} invariant under global transposition?
  In that case, transposes need to be added to POVM elements in our construction.
}:
\begin{equation} \label{Eq:AxiomTranspose} 
  \mathcal{E} \left( \rhoAB \right) =\mathcal{E} \left( \rhoAB^{\top} \right) \;,
\end{equation}
and extend the domain of validity of $\mathcal{E}$ to unnormalized states:
\begin{equation}
  \label{Eq:ExtendValidity}
  \mathcal{E} \left( \rhoAB \right) \equiv \tmop{tr} \left[ \rhoAB \right] \mathcal{E} \left( \rhoAB / \tmop{tr} \left[ \rhoAB \right] \right), \quad \mathcal{E} (0) \equiv 0\;,
\end{equation}
so that $\mathcal{E} \left( \alpha \rhoAB \right) = \alpha ~ \mathcal{E} \left( \rhoAB \right)$ for $\alpha \geqslant 0$.

\subsection{Main claim: quantitative MDIEWs}

We come to our main claim: these quantitative MDIEWs are easily obtained using conic linear programming~{\cite{Luenberger2016}}, starting only from the description $\mathcal{S}$ of the semiquantum scenario~(\ref{Eq:SemiquantumScenario}) and the observations $P (a b | x y)$.
Namely, a lower bound on a given entanglement measure $\mathcal{E} \left( \rhoAB \right)$ comes from the optimal solution $\nu^{\ast} = \min \nu$ of following program:
\begin{equation} \label{Eq:ClaimCLP} 
\arraycolsep=2pt\def\arraystretch{2.2}
\begin{array}{lll}
  \multicolumn{3}{c}{\text{\tmstrong{Entanglement quantification program}}}\\
  \text{minimize} & \nu \equiv \sum_{a b} \mathcal{E}\left(\Pi_{ab}\right) / (\dX \dY) & \\
  \text{over} & \Pi_{a b} \in \HermSDP{\HX \otimes \HY} & \forall a b\\
  \text{subject to} & \tmop{tr} [\Pi_{a b} \cdot (\xi_x \otimes \psi_y)] = P (a b | x y) & \forall a b x y
\end{array}
\end{equation}
where $\{ \Pi_{a b} \}$ is a POVM that effectively describes the behavior of the untrusted devices in the setup.
In Section~\ref{Sec:Characterizing}, we show that these POVM elements are proportional to states that can be recovered from the setup, such that the objective function computes the average entanglement in the recovered ensemble, and is thus a proper lower bound on $\mathcal{E}(\rhoAB)$. 
Reformulating slighlty~\eqref{Eq:ClaimCLP} in Section~\ref{Sec:Quantifying}, we demonstrate that the dual variables $\beta_{a b x y}$ in its numerical solution~{\cite{Sturm2002,Tutuncu2003}} form a quantitative MDIEW\footnote{It it similar to the use of linear programming to check the (non)locality of given correlations; the dual variables provide the particular Bell inequality violated by those correlations~{\cite{Zukowski1999a,Kaszlikowski2000,Brunner2014}}.}.

The program~\eqref{Eq:ClaimCLP} stays valid when the sets of inputs are not tomographically complete.
Moreover, it can be adapted when measurements are not available for all input pairs $(x, y)$; in that case, we simply omit the corresponding equality constraints.

\section{Entanglement in semiquantum scenarios}
\label{Sec:Characterizing}

We now show how to compute a lower bound on the entanglement present in $\rhoAB$ from observable correlations $P(ab|xy)$ when the sets of inputs $\{\xi_x\}$ and $\{\psi_y\}$ are tomographically complete.
To do so, we describe the untrusted part of the semiquantum setup as an effective POVM (\ref{Sec:EffectivePOVM}), and then show that its elements are proportional to bipartite states that can be extracted from the setup (\ref{Sec:EntanglementDistributed}).
The average entanglement of the recovered ensemble provides a lower bound on the entanglement of $\rhoAB$ (\ref{Sec:EntanglementLowerBound}).
We show that the bound is tight when the measurement devices implement Bell measurements (\ref{Sec:OptimalMeasurements}).

\subsection{Semiquantum setups as effective POVMs}
\label{Sec:EffectivePOVM}
For all purposes, our source and measurement devices act together as a black box, distributed between two laboratories A and B.
In each laboratory, the black box receives quantum inputs in $\HX$ (resp. $\HY$) and produces the classical outputs $a$ ($b$); such a system is a joint measurement (\figurename{~\ref{Fig:EffectivePOVM}a}), which we describe using an effective POVM $\{ \Pi_{a b} \in \mathsf{Herm}_+ \left( \HX \otimes \HY \right) \}$ acting on $\xi_x \otimes \psi_y$:
\begin{equation}
  \label{Eq:EffectivePOVMQuantum} 
  \Pi_{a b} = \tmop{tr}_{\tmop{AB}} \left[ \left( \mathbbm{1}_{\text{X}} \otimes \rhoAB \otimes \mathbbm{1}_{\text{Y}} \right) (A_a \otimes B_b) \right] .
\end{equation}
so that Eq.~\eqref{Eq:PQ} writes:
\begin{equation}
 \label{Eq:Tomography} \tmop{tr} [\Pi_{a b} (\xi_x \otimes \psi_y)] = \PQ (a b
 | x y) .
\end{equation}
For simplicity, we assume in the present Section that the sets of inputs are tomographically complete, so that the effective POVM $\{\Pi_{ab}\}$ is completely specified by the linear inversion of Eq.~(\ref{Eq:Tomography}) --- in a sense, a semiquantum experiment performs quantum tomography~{\cite{Lundeen2009}} of an unknown distributed measurement.
This requirement will be relaxed in Section~\ref{Sec:Quantifying}.

\begin{figure}[t]
  \includegraphics[width=\columnwidth]{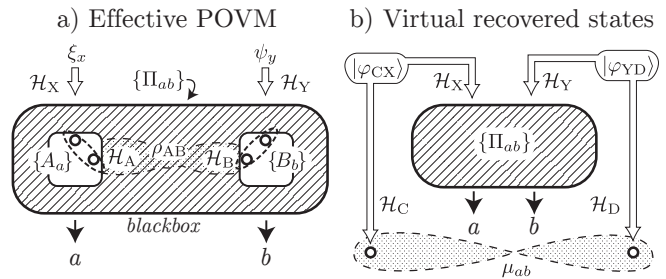}
  \caption{
    \label{Fig:EffectivePOVM}
    a) In a semiquantum setup, the measurement devices A  and B share the state $\rhoAB$ as a resource, and each device performs a joint measurement on the quantum input it receives and part of $\rhoAB$.
    If we see the whole unknown part as a black box, its proper quantum description is that of an effective distributed POVM $\{ \Pi_{a b} \}$. 
    b) As we have no experimental access to $\rhoAB$, we quantify the entanglement present by extracting quantum states from the black box using a (virtual) LOCC protocol. 
  }
\end{figure}

\subsection{Recovering states from the black box}
\label{Sec:EntanglementDistributed}

We consider the recovery of bipartite states $\mu_{a b}$ from the measurement $\{ \Pi_{a b} \}$ by local quantum operations.
For that purpose, we produce maximally entangled states (\figurename{~\ref{Fig:EffectivePOVM}b}) locally in each laboratory.
The first maximally entangled state $\ket{\varphi_{\text{CX}}}$ is produced near the input of the device A, between a reference subsystem $\HC$ and the input subsystem $\HX$. The second maximally entangled state $\ket{\varphi_{\text{YD}}}$ is produced near the input of the device B, between the input subsystem $\HY$ and a reference subsystem $\HD$.
The space $\HC$ (resp. $\HD$) is isomorphic to $\HX$ ($\HY$).
We define, in the computational basis:
\begin{equation}
 \ket{\varphi_{\text{CX}}} \equiv \frac{1}{\sqrt{\dX}} \sum_{i = 1}^{\dX} \ket{i i}, \qquad \ket{\varphi_{\text{YD}}} \equiv \frac{1}{\sqrt{\dY}} \sum_{i = 1}^{\dY} | i i \rangle .
\end{equation}
Outside the black box (and $\rhoAB$), no initial entanglement is present in $\ket{\varphi_{\text{CX}}} \otimes \ket{\varphi_{\text{YD}}}$ between the devices A and B, across the CX/YD boundary.
After performing the measurement $\{ \Pi_{a b} \}$, we obtain the post-measurement states $\mu_{a b}$ with probability $p_{a b}$:
\begin{equation} \label{Eq:RecoveredStates} 
 \mu_{a b} = \frac{\Pi_{a b}^{\top}}{\tmop{tr} [\Pi_{a b}]}, \qquad p_{a b} = \frac{\tmop{tr} [\Pi_{a b}]}{\dX \dY},
\end{equation}
where the transpose is done with respect to the computational basis.
We name those $\mu_{a b}$ the {\textit{(virtual) recovered states}} of the measurement $\{\Pi_{a b} \}$.

\subsection{Entanglement lower bound from recovered states}
\label{Sec:EntanglementLowerBound}
Let us look at the whole process again, including operations performed inside the black box.
We start with a bipartite state $\rhoAB$.
In the laboratory A, we produce a state $\ket{\varphi_{\text{CX}}}$, which is jointly measured with part of $\rhoAB$ using the POVM $\{A_a\}$ implemented by the measurement device. The outputs of this process are the classical measurement outcome $a$ and a quantum state in $\HC$. In laboratory B, the same happens with the production of $\ket{\varphi_{\text{DY}}}$, the measurement of $\{B_b\}$. The outputs are the classical outcome $b$ and a quantum state in $\HD$.

This process is a LOCC operation in the sense of~{\cite{Horodecki2001}}, that transforms the state $\rhoAB$ into the ensemble $\{ p_{a b}, \mu_{a b} \}$ (an ensemble is a set of states with ascribed probabilities).
Such LOCC operations do not increase the entanglement on average~{\cite{Horodecki2001}}.
Consequently, the average entanglement of the ensemble $\{ p_{a b}, \mu_{a b} \}$, which we write $\nu^\ast$, is a lower bound on the entanglement of $\rhoAB$:
\begin{equation} \label{Eq:LowerBound} 
 \nu^\ast \equiv \sum_{a b} p_{a b} \mathcal{E} (\mu_{a b}) \leqslant \mathcal{E} \left( \rhoAB \right) .
\end{equation}
An alternative formulation is obtained by substituting~(\ref{Eq:RecoveredStates}) into~(\ref{Eq:LowerBound}) using requirements~(\ref{Eq:AxiomTranspose}) and~(\ref{Eq:ExtendValidity}):
\begin{equation} \label{Eq:LowerBound1} 
 \nu^{\ast} = \frac{1}{\dX \dY} \sum_{a b} \mathcal{E} (\Pi_{a b}) .
\end{equation}
This last equation is particularly useful as the observations $P (a b | x y)$ provide directly the POVM elements $\{ \Pi_{a b} \}$.
We do not claim that our process provides optimal recovered states $\mu_{a b}$ (and thus optimal $\nu^{\ast}$).
However, it always provides a lower bound on the entanglement of $\rhoAB$.
In certain circumstances, this bound is tight; in particular, when the devices A and B implement optimal measurements $\{ A_a \}$, $\{ B_b\}$ as we see below.

\subsection{Optimal measurements in semiquantum scenarios}
\label{Sec:OptimalMeasurements}
Consider a semiquantum scenario where the parties use a bipartite quantum state $\rhoAB$ as in the correlations~(\ref{Eq:PQ}).
Under which conditions can our procedure detect and quantify entanglement optimally?
Sufficient conditions are provided below.
\begin{proposition}
  \label{Prop:OptimalMeasurements}
  Let $\rhoAB$ be a bipartite state of dimension $\dA \times \dB$.
  Assume that $\rhoAB$ is tested in a semiquantum scenario $\mathcal{S}$ with tomographically complete sets of inputs of dimension $\dX = \dA$, $\dY = \dB$, and that the measurement devices implement generalized Bell measurements $\{ A_a \}$ and $\{ B_b \}$ with $\nA = \dX^2$ and $\nB = \dY^2$ measurement outcomes corresponding to projections on maximally entangled states.  
  Then, the correlations $P (a b | x y)$ lead to a tight bound $\nu^{\ast} =\mathcal{E} \left(\rhoAB \right)$ for any entanglement measure $\mathcal{E}$ in~\eqref{Eq:LowerBound}.
\end{proposition}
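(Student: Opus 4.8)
The plan is to prove the reverse inequality $\nu^{\ast}\geqslant\mathcal{E}(\rhoAB)$; together with~\eqref{Eq:LowerBound} this gives the stated equality. By~\eqref{Eq:LowerBound1} it suffices to evaluate the recovered ensemble $\{p_{ab},\mu_{ab}\}$ of~\eqref{Eq:RecoveredStates} under the hypotheses and to check that (a)~$\{p_{ab}\}$ is a probability distribution and (b)~$\mathcal{E}(\mu_{ab})=\mathcal{E}(\rhoAB)$ for all $(a,b)$; then $\nu^{\ast}=\sum_{ab}p_{ab}\,\mathcal{E}(\mu_{ab})=\mathcal{E}(\rhoAB)$. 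Point~(a) is general: since the generalized Bell measurements are complete, $\sum_a A_a=\mathbbm{1}_{\HX\otimes\HA}$ and $\sum_b B_b=\mathbbm{1}_{\HB\otimes\HY}$, so summing~\eqref{Eq:EffectivePOVMQuantum} and using $\tmop{tr}[\rhoAB]=1$ gives $\sum_{ab}\Pi_{ab}=\mathbbm{1}_{\HX}\otimes\mathbbm{1}_{\HY}$, hence $\sum_{ab}\tmop{tr}[\Pi_{ab}]=\dX\dY$ and $\sum_{ab}p_{ab}=1$ by~\eqref{Eq:RecoveredStates}.

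For point~(b) I would use the gate-teleportation identity. Since $\dX=\dA$, each outcome $a$ is a rank-one projector $A_a=\ketbra{\Phi_a}$ onto a maximally entangled vector of $\HX\otimes\HA$, so $\ket{\Phi_a}=(\mathbbm{1}_{\HX}\otimes U_a)\frac{1}{\sqrt{\dX}}\sum_i\ket{ii}$ for some unitary $U_a$ on $\HA$; likewise $B_b=\ketbra{\Psi_b}$ with $\ket{\Psi_b}=(V_b\otimes\mathbbm{1}_{\HY})\frac{1}{\sqrt{\dY}}\sum_j\ket{jj}$. Substituting these into~\eqref{Eq:EffectivePOVMQuantum} and carrying out the partial trace over $\HA\otimes\HB$ with the transpose trick for maximally entangled states, one obtains, under the canonical identifications $\HX\cong\HA$ and $\HY\cong\HB$,
\begin{equation}
  \Pi_{ab}=\frac{1}{\dX\dY}\bigl(U_a^{\top}\otimes V_b^{\top}\bigr)\,\rhoAB^{\top}\,\bigl(U_a^{\top}\otimes V_b^{\top}\bigr)^{\dagger}.
\end{equation}
In particular $\tmop{tr}[\Pi_{ab}]=1/(\dX\dY)$, so $p_{ab}=1/(\dX^{2}\dY^{2})$ is uniform, and by~\eqref{Eq:RecoveredStates}
\begin{equation}
  \mu_{ab}=\Pi_{ab}^{\top}/\tmop{tr}[\Pi_{ab}]=\bigl(U_a^{\dagger}\otimes V_b^{\dagger}\bigr)\,\rhoAB\,\bigl(U_a\otimes V_b\bigr).
\end{equation}
The transpose in~\eqref{Eq:RecoveredStates} is precisely what cancels the one produced by the teleportation step, so each $\mu_{ab}$ is related to $\rhoAB$ by a local unitary. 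Conjugation by $U_a\otimes V_b$ and by its inverse are both LOCC maps, so the LOCC-monotonicity axiom applied in both directions forces $\mathcal{E}(\mu_{ab})=\mathcal{E}(\rhoAB)$; this is point~(b), and $\nu^{\ast}=\mathcal{E}(\rhoAB)$ follows. (Should a residual global transpose survive a different choice of conventions, axiom~\eqref{Eq:AxiomTranspose} absorbs it.)

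The only genuinely technical ingredient is the teleportation computation yielding the displayed form of $\Pi_{ab}$: one must track the transposes, complex conjugations, and the three-fold basis identifications $\HC\cong\HX\cong\HA$ and $\HD\cong\HY\cong\HB$ carefully. I do not expect a conceptual obstacle, since the dimension matching $\dX=\dA$, $\dY=\dB$ is exactly what makes the relevant maximally entangled projectors exist and the teleportation exact, tomographic completeness is what makes $\{\Pi_{ab}\}$ well defined through~\eqref{Eq:Tomography}, and any local unitary that appears is invisible to $\mathcal{E}$.
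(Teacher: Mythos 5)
Your proposal is correct and follows essentially the same route as the paper's proof in Appendix~\ref{App:OptimalMeasurements}: compute the effective POVM elements via the teleportation identity, obtain $\Pi_{ab}\propto$ a (transposed) local-unitary conjugate of $\rhoAB$ with uniform $p_{ab}=1/(\dX^2\dY^2)$, and conclude $\mathcal{E}(\mu_{ab})=\mathcal{E}(\rhoAB)$ from local-unitary invariance so that the average in~\eqref{Eq:LowerBound} saturates. The only differences are cosmetic (which tensor factor carries the unitary, hence $U_a^{\top}$ versus $U_a$) plus your explicit check that $\sum_{ab}p_{ab}=1$, which the paper leaves implicit.
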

\begin{proof}
  See Appendix~\ref{App:OptimalMeasurements}.
\end{proof}

Thus, experimental implementations should strive to implement generalized Bell measurements.
While our method is robust against imperfections in the devices, so that overestimation never happens, eventual imperfections can lead to a lower estimation $\nu^{\ast} <\mathcal{E} \left( \rhoAB \right)$.
The construction presented in this Section can be related to previous works, see Appendix~\ref{App:Relation} for a discussion.
Note that other constructions of quantitative MDIEWs may well be possible, in which case other measurements could be optimal.

\section{Quantifying entanglement by conic programming}
\label{Sec:Quantifying}

We now prove our main claim~\eqref{Eq:ClaimCLP} in two steps.
First, we consider the case where full tomographic data about the setup is not available; then we have to apply the construction of Section~\ref{Sec:Characterizing} by minimizing our bound over all feasible effective POVMs, which is a conic linear program.
Then, we show that the dual of that program provides a quantitative MDIEW.

\subsection{Computing bounds from incomplete data}

We now consider semiquantum scenarios where the sets of inputs $\{\xi_x \}$, $\{ \psi_y \}$ are not necessarily complete.
To reduce the experimental requirements, we also allow incomplete data: say, $P (a b | x y)$ is only available for the pairs of indices $(x, y) \in \mathcal{I}$ for a set $\mathcal{I}$.
Then, the tomography equation~(\ref{Eq:Tomography}) leads to a set of possible solutions $\{ \Pi_{a b} \}$.
We get a lower bound $\nu^{\ast}$ on the entanglement by considering the worst case scenario in~(\ref{Eq:LowerBound1}):
\begin{equation} \label{Eq:MinimizationLowerBound}
 \nu^{\ast} = \min \frac{1}{\dX \dY} \sum_{a b} \nu_{a b},
\end{equation}
\begin{equation} \label{Eq:EntanglementConicConstraint} 
 \mathcal{E} (\Pi_{a b}) \leqslant \nu_{a b}, \qquad \forall a b,
\end{equation}
where the minimization is done over the possible POVMs $\{ \Pi_{a b} \}$ that satisfy~(\ref{Eq:Tomography}), along with the dummy variables $\nu_{a b} \in \mathbbm{R}$.
The constraint~(\ref{Eq:EntanglementConicConstraint}) can be written $(\nu_{a b}, \Pi_{a b}) \in \hat{\mathcal{E}}$ where
\begin{equation}\label{Eq:EntanglementMeasureCone}
 \hat{\mathcal{E}} \equiv \left\{ (\omega, \rho) \text{ such that } \mathcal{E} (\rho) \leqslant \omega \right\}
\end{equation}
is easily verified to be a convex cone (see Appendix~\ref{App:RecipesEntanglement}).
We thus obtain the conic linear form of the program~(\ref{Eq:ClaimCLP}) presented in our introductory claim:
\begin{equation}
  \label{Eq:ClaimCLPDuplicate} 
  \arraycolsep=2pt\def\arraystretch{1.8}
  \begin{array}{ll}
    \multicolumn{2}{c}{\text{\tmstrong{Entanglement quantification }}}  \\[-0.8em]
    \multicolumn{2}{c}{\text{\tmstrong{conic linear program }(primal)}}\\
    \text{minimize } & \nu = \frac{1}{\dX \dY} \sum_{a b} \nu_{a b} \\
    \text{over } & (\nu_{a b}, \Pi_{a b}) \in \hat{\mathcal{E}}, \quad \forall a b\\
    \text{subject to } & \tmop{tr} [\Pi_{a b} \cdot (\xi_x \otimes \psi_y)] = P (a b | x y), \\
    \multicolumn{2}{r}{\forall a b, \quad \forall (x, y) \in \mathcal{I}}.
  \end{array}
\end{equation}
In a semiquantum scenario $\mathcal{S}$, we can solve this program for the observations $P (a b | x y)$, obtain the optimal value $\nu^{\ast}$ which is a lower bound on the entanglement present in $\rhoAB$.
Ready-to-use formulations of $\hat{\mathcal{E}}$ corresponding to various entanglement measures are provided in Appendix~\ref{App:RecipesEntanglement}, in a form that can be directly entered into the toolboxes CVX~{\cite{Grant2014}} and YALMIP~{\cite{Lofberg2004}}.

As part of the solution of the above program, we get the optimal value of the dual variables $\beta^{\ast}_{a b x y}$ corresponding to the equality constraints.
Their interpretation as a quantitative MDIEW is explained below.

\subsection{Quantitative MDIEW from the dual solution}

We now examine how to recover a quantitative MDIEW from the dual program of~(\ref{Eq:ClaimCLPDuplicate}), using the correspondence described in Appendix~\ref{App:CLP}:
\begin{equation} \label{Eq:DualCLP}
\arraycolsep=2pt\def\arraystretch{1.8}
\begin{array}{ll}
\multicolumn{2}{c}{\text{ {\tmstrong{Entanglement quantification}}}} \\[-0.8em]
\multicolumn{2}{c}{\text{\tmstrong{conic linear program} (dual)}} \\
 \text{maximize } & \sum_{a b} \sum_{(x, y) \in \mathcal{I}} \beta_{a b x
 y} P (a b | x y)\\
 \text{over } & \beta_{a b x y} \in \mathbbm{R} \quad \forall a b x y\\
  \text{subject to } &
                       \left (\frac{1}{\dX \dY}, -\sum_{ab}\sum_{(x,y)\in\mathcal{I}}
                       \beta_{abxy} (\xi_x \otimes \psi_y) \right) \in \hat{\mathcal{E}}^*.
 \end{array}
\end{equation}
We observe the following:
\begin{enumerateroman}
 \item The dual objective involves only the dual variables $\beta_{a b x y}$ corresponding to the primal constraint $\tmop{tr} [\Pi_{a b} \cdot (\xi_x \otimes \psi_y)] = P (a b | x y)$.
 \item The dual constraints depend on the definition of the semiquantum scenario (including the input sets $\{ \xi_x \}$, $\{ \psi_y \}$) but not on the coefficients $P (a b | x y)$. 
 \item The coefficients $P (a b | x y)$ are only present in the objective of
 the dual program.
\end{enumerateroman}
Now let $\beta^{\star}_{a b x y}$ be any feasible solution of the dual program~(\ref{Eq:DualCLP}), not necessarily optimal.
We claim it is a quantitative MDIEW. To prove this claim, observe that $\beta_{a b x y}^{\star}$ is a feasible solution of~(\ref{Eq:DualCLP}) regardless of the observations $P (a b | x y)$ --- for a given scenario and entanglement measure.
As a feasible solution, $I^{\star} (\vec{P}) = \sum_{a b x y} \beta_{a b x y}^{\star} P (a b | x y)$ is a lower bound over the maximum value of the dual program.
By duality (see~(\ref{Eq:WeakDuality}) in Appendix~\ref{App:CLP}), $I^{\star} (\vec{P})$ is a lower bound over the optimum of the primal program and thus a valid lower bound on $\mathcal{E}
\left( \rhoAB \right)$.
In particular, this proves our earlier claim made in~\cite{Verbanis2016}, see Appendix~\ref{App:RelationPreviousWork} for details.

\section{Experimental implementations}
\label{Sec:Experimental}

The procedure described in Section~\ref{Sec:Quantifying} works perfectly when the observations $P (a b | x y)$ correspond exactly to the ideal quantum mechanical description of Eq.~(\ref{Eq:PQ}).
In this Section, we deal with the following practical issues.
First, the observations $P (a b | x y)$ are usually frequencies estimated from a finite number of samples, and are thus affected by statistical noise.
Second, while MDIEWs are robust against losses, these losses affect the interpretation of $\nu^{\ast}$ as an entanglement measure.
Finally, we recall that our MDIEWs are robust against classical communication between the devices, so that the measurements do not need to be performed in a spacelike separate manner.

\subsection{Dealing with noisy data}

The observed frequencies $P_{\text{obs}} (a b | x y)$ only approximate the true quantum distribution $P_{\text{Q}} (a b | x y)$, and thus two problems can appear when recovering the effective POVM elements $\{ \Pi_{a b} \}$ from Eq.~(\ref{Eq:Tomography}).
The first problem appears when $\{ \Pi_{a b} \}$ is not a proper POVM because one of its elements $\Pi_{a b}$ has a negative eigenvalue; the corresponding $\mu_{a b}$ is not positive semidefinite, which makes our program~(\ref{Eq:ClaimCLP}) infeasible.
The second problem appears when the set of observed inputs has a linear dependency of the form:
\begin{equation}
 \sum_{(x, y) \in \mathcal{I}} \alpha_{x y} (\xi_x \otimes \psi_y) = 0
\end{equation}
for nonzero coefficients $\alpha_{x y}$ (this happens, for example, when $\{ \xi_x \}$, $\{ \psi_y \}$ contain the six qubit states $| \pm x \rangle, | \pm y \rangle, | \pm z \rangle$).
Then, the linear system~(\ref{Eq:Tomography}) is consistent only when $\sum_{(x, y) \in \mathcal{I}} \alpha_{x y} P (a b | x y) = 0$ --- which, of course, is never the case when dealing with noisy data.

Our conic programs not only quantify the entanglement, but also return a quantitative MDIEW, which enables linear evaluation of the desired entanglement measure.
We thus recommend a procedure in two steps.
In the first step, we work either from simulated data or a first batch of observations $P_{\text{test}} (a b | x y)$.
We find the closest consistent probability distribution $P_{\text{reg}} (a b | x y)$ using a distance such as the Kullback-Leibler divergence or the Euclidean norm.
For example, the Euclidean norm minimization corresponds to the program:
\begin{equation}  \label{Eq:RegularizationSDP} 
\arraycolsep=2pt\def\arraystretch{1.8}
 \begin{array}{ll}
\multicolumn{2}{c}{ \text{{\tmstrong{Regularization semidefinite program}}}} \\
 \text{minimize } & \left\| \vec{P}_{\text{reg}} - \vec{P}_{\text{test}} \right\|_2\\
 \text{over } & \vec{P}_{\text{reg}} \in \mathbbm{R}^{n}_+\\
 & \Pi_{a b} \in \mathsf{Herm}_+ \left( \HX \otimes \HY \right)\\
 \text{such that } & \tmop{tr} [\Pi_{a b} (\xi_x \otimes \psi_y)] =
 P_{\text{reg}} (a b | x y),
 \end{array}
\end{equation}
which can be formulated as a semidefinite program~{\cite{Boyd2004}}; see also~{\cite{Lin2017}} for a formulation of KL-divergence minimization using convex cones.
We use the regularized distribution $P_{\text{reg}} (a b | x y)$ in the program~(\ref{Eq:ClaimCLP}), from which we obtain an estimation of the lower bound on entanglement $\nu^\ast$ and a quantitative MDIEW $\beta_{a b x y}$.
As the estimate $\nu^\ast$ is obtained using regularized data, it can exhibit bias~\cite{Schwemmer2015} or lead to overestimation of entanglement.
In other words, the quantitative MDIEW $\beta_{abxy}$ possibly fits the statistical fluctuations on this first set of data.
The problem is easily avoided by evaluating this MDIEW on a second distribution $P_{\text{exp}} (a b | x y)$ coming from a fresh set of observations without applying any regularization: this corresponds to parameter estimation by linear evaluation, as advocated in~\cite{Schwemmer2015}.

\subsection{Robustness against losses}

The certification of entanglement using MDIEWs is robust against losses or events when no state is produced: we simply consider ``no particle detected'' as one of the possible measurement outcomes (here, we use convention $a = 0$ or $b = 0$ for that outcome), and thus false positives never occur.

\subsubsection{Isotropic losses}

MDIEWs exhibit a stronger property when the losses do not depend on the inputs $(x,y)$.
In that case, even arbitrary high losses do not introduce false negatives.
More precisely, we consider correlations with isotropic losses, which have the form
\begin{equation}
  \label{Eq:Loss}
  P_{\gamma} (a b | x y) = \gamma P_{\text{ideal}} (a b | x y) + (1 - \gamma) P^\gamma_{\emptyset} (a b | x y),
\end{equation}
where $\vec{P}_{\text{ideal}}$ always registers a detection ($\sum_{ab\ge 1}P_{\text{ideal}} (ab | x y) = 1$), while $\vec{P}^\gamma_{\emptyset}$ (whose form can depend on $\gamma$) always registers a nondetection in at least one outcome ${a,b}$:
\begin{equation}
  P^\gamma_\emptyset(00|xy) + \sum_{a\ge 1} P^\gamma_\emptyset (a0|xy) + \sum_{b\ge 1} P^\gamma_\emptyset(0b|xy) = 1 \;.
\end{equation}

Such distributions can model nondetections due to probabilistic state preparation, losses in transmission and detection inefficiencies, provided those are not correlated with the measurement basis. 

If any entanglement can be detected using our method when $\gamma = 1$, then $\vec{P}_{\text{ideal}}$ corresponds to an effective POVM with some of the recovered states $\mu_{\text{ideal}, a b}$ entangled for $a, b \geqslant 1$; and $\nu^{\ast}_{\text{ideal}} = \sum_{a b} p_{\text{ideal}, a b} \mathcal{E} \left( \mu_{\text{ideal}, a b} \right) > 0$.
When $\gamma \rightarrow 0$, the corresponding recovered states are still $\mu_{\gamma, a b} = \mu_{\text{ideal}, a b}$ by linearity of~(\ref{Eq:Loss}) and~(\ref{Eq:Tomography}), but the coefficient $p_{\gamma, a b} = \gamma p_{\text{ideal}, a b}$ is rescaled accordingly.
Thus, we obtain:
\begin{equation}
 \nu^{\ast}_{\gamma} \geqslant \sum_{a b \geqslant 1} \gamma ~ p_{\text{ideal}, a b} ~ \mathcal{E} \left( \mu_{\text{ideal}, a b} \right) = \gamma ~ \nu^{\ast}_{\text{ideal}},
\end{equation}
which is positive for any $\gamma > 0$.
Thus, we can always certify the presence of entanglement when dealing with losses of the form~(\ref{Eq:Loss}).
However, $\nu^{\ast}_{\gamma}$ corresponds to the average entanglement that can be recovered from the semiquantum devices, it is proportional to the probability of a conclusive event ($a, b \geqslant 1$).
This is not surprising: we observe a similar phenomenon when estimating the entanglement in a state with a high ``vacuum'' component, such as:
\begin{equation}
 \rho_{\gamma} = (1 - \gamma) | 00 \rangle \langle 00 | + \gamma \rho_1 \;,
\end{equation}
where $\rho_1 = | 11 + 22 \rangle \langle 11 + 22 |/2$, whose entanglement has upper bound  proportional to $\gamma$ by convexity: $\mathcal{E}(\rho_\gamma) \le \gamma \mathcal{E}(\rho_1)$.

\subsubsection{Quantifying entanglement after local filtering}

In some applications, we want to quantify entanglement after discarding inconclusive events, for example in an experiment involving continuous-wave-pumped sources where events are only recorded when successful detection occurs.
If we postselect on conclusive events, the reported figure of merit needs to be interpreted carefully~{\cite{Lim2016}}: while entanglement measures are monotonic under LOCC operations, they are not monotonic after postselection (SLOCC).
After such postselection, the quantity we estimate corresponds to the amount of entanglement $\mathcal{E}^{\text{SLOCC}} \left( \rhoAB \right)$ after possible local filtering~{\cite{Gisin1996,Verstraete2001a}}:
\begin{multline}
 \mathcal{E}^{\text{SLOCC}} \left( \rhoAB \right) = \max_{A, B} \mathcal{E} \left (\frac{(A \otimes B) \rhoAB (A \otimes B)^{\dag}}{\tmop{tr} \left[ (A \otimes B) \rhoAB (A \otimes B)^{\dag} \right]} \right ) \;, \\
 \mathbbm{1} - A^{\dag} A \succcurlyeq 0, \quad \mathbbm{1} - B^{\dag} B \succcurlyeq 0,
\end{multline}
and $\mathcal{E}^{\text{SLOCC}} \left( \rhoAB \right) \geqslant \mathcal{E} \left( \rhoAB \right)$.
Still, filtering/postselection can only increase the amount of entanglement within some limitations.
For example, the negativity of qubit Werner states cannot be increased by SLOCC~{\cite{Verstraete2001,Verstraete2002}}.

By construction, the extraction of the recovered state $\mu_{a b}$, postselected on $(a, b)$, is a state that can be obtained from $\rhoAB$ by SLOCC (\figurename{~\ref{Fig:EffectivePOVM}b}).
Thus:
\begin{equation}
 \mathcal{E}^{\text{SLOCC}} \left( \rhoAB \right) \geqslant \max_{a b} \mathcal{E} (\mu_{a b}),
\end{equation}
and we directly obtain a lower bound on $\mathcal{E}^{\text{SLOCC}}$ from the $\mu_{a b}$.

\subsubsection{When the total number of events is unknown}
In our previous experimental work~{\cite{Verbanis2016}}, we obtained the number of conclusive events $N (a b x y)$ for the outcomes $a, b \geqslant 1$, while the number of inconclusive events $N (\emptyset x y)$ was unknown.
In these inconclusive events, we collect under the notation ``$\emptyset$'' all outcomes with at least one of $a, b = 0$.

Now, we cannot simply ignore the inconclusive events as is usually done under the assumption of fair-sampling:

\begin{equation}
  P_\text{FS} (a b | x y) = \frac{N (a b x y)}{\sum_{a b} N (a b x y)},
\end{equation}
because we cannot assume that the number of conclusive events is the same regardless of the input pair, as we make no assumptions about the measurement devices.
However, when the experiment is run for all input pairs $(x, y)$ with constant efficiency and duration, we have $N (x y) = N (\emptyset x y) + \sum_{a b} N (a b x y) = N^{\ast}$ constant.

The true value of $N^{\ast}$ is unknown, however it satisfies $N^{\ast} \geqslant \sum_{a b} N (a b x y)$.
We can set it to its lower bound:
\begin{equation}
 N^{\ast} = \max_{x y} \sum_{a b} N (a b x y),
\end{equation}
so that:
\begin{equation}
 \label{Eq:DistributionFromCounts} \tilde{P} (a b | x y) = \frac{N (a b x y)}{N^{\ast}}, \, \tilde{P} (\emptyset | x y) = 1 - \sum_{a b} \tilde{P} (a b | x y)
\end{equation}
represents a guess that only differs from the true distribution $P (a b | x y)$ by a constant factor (for $ab \ge 1$):
\begin{equation}
 P (a b | x y) = \alpha \tilde{P} (a b | x y), \qquad \alpha \in] 0, 1] .
\end{equation}
The effective POVM $\{ \tilde{\Pi}_{a b} \}$ differs from $\{ \Pi_{a b} \}$ by the same constant factor, thus our guess $\tilde{P} (a b | x y)$ leads to the same recovered states $\tilde{\mu}_{a b}$ as the true $P (a b | x y)$ for $a, b \geqslant 1$.
We now write:
\begin{equation}
\begin{split}
 \overline{\nu} = \sum_{a b \geqslant 1} \tilde{p}_{a b} \mathcal{E} (\tilde{\mu}_{a b}) &= \frac{\sum_{a b \geqslant 1} \tilde{p}_{a b} \mathcal{E} (\tilde{\mu}_{a b})}{\tilde{p}_{\emptyset} + \sum_{a b \geqslant 1} \tilde{p}_{a b}} \\
& \leqslant \frac{\sum_{a b \geqslant 1} \tilde{p}_{a b} \mathcal{E} (\tilde{\mu}_{a b})}{\sum_{a b \geqslant 1} \tilde{p}_{a b}} \\
&\leqslant \max_{a b} \mathcal{E} (\tilde{\mu}_{a b}) = \max_{a b} \mathcal{E} (\mu_{a b}),
\end{split}
\end{equation}
and observe that $\overline{\nu}$ can be computed using our conic linear program~(\ref{Eq:ClaimCLPDuplicate}) with the subnormalized data $\tilde{P} (a b | x y)$ for $a, b \geqslant 1$; then $\overline{\nu}$ is a lower bound on $\mathcal{E}^{\text{SLOCC}} \left( \rhoAB \right)$.

\subsection{Robustness against classical communication}

In Section~\ref{Sec:EntanglementDistributed}, we defined our bound $\nu$ with respect to the entanglement production capacity of the effective POVM $\{ \Pi_{a b} \}$, which in turn provides a lower bound on the entanglement of the shared state $\rhoAB$.
These bounds hold because entanglement measures are monotonic under local operations, also when allowing local communication (LOCC).
Thus, our quantitative MDIEWs are robust against classical communication.
This eases the experimental requirements as the measurements do not need to be performed in a spacelike separated manner.

What about {\textit{quantum}} communication between the devices? Indeed, the device A could transmit its quantum input to the device B, and then the joint measurement can be performed in a single device with the outcome $a$ transmitted back to A.
In that case however, the violation of our MDIEW shows the existence of a quantum channel that preserves entanglement; and this channel can also be used to establish an entangled state between A and B (we will consider the characterization of quantum channels using measurement-device-independent witnesses in another work~{\cite{Rosset2017}}).

\subsection{Experimental application}

We now come back to the experimental observations of our previous Letter~{\cite{Verbanis2016}}, where we obtained data sets for a Werner state:
\begin{equation}
 \label{Eq:WernerState} \rhoAB = \lambda \ketbra{\varphi_2} + (1 - \lambda) \mathbbm{1} / 4,
\end{equation}
for $\lambda$ between $0.29$ and $0.94$. Each data set contains the event counts $N (a b x y)$ for conclusive events, from which we estimate the frequencies $\tilde{P} (a b | x y)$ according to the procedure described in Eq.~(\ref{Eq:DistributionFromCounts}).
We construct a quantitative MDIEW for the following entanglement measures: the negativity, the absolute, generalized and random robustness, and the semidefinite upper bound on entanglement distillation.
For each entanglement measure, we take the frequencies $\tilde{P} (a b | x y)$ for the Bell state fraction $\lambda = 0.94$ and compute the regularized distribution $P_{\text{reg}} (a b | x y)$ using~(\ref{Eq:RegularizationSDP}). We finally run the conic program using CVX~{\cite{Grant2014}} and Mosek~{\cite{ApS2015}} on $\vec{P}_{\text{reg}}$ to obtain $\nu^{\ast}$ and the dual variables $\beta_{a b x y}$ corresponding to a quantitative MDIEW.
We then evaluate $\sum_{a b x y} \beta_{a b x y} \tilde{P} (a b | x y) = \vec{\beta} \cdot \tilde{P}$ for all the data sets without applying regularization.
We present the results in \figurename{~\ref{Fig:ExperimentalResults}}, illustrating the versatility of our method.

\begin{figure}[h]
\includegraphics[width=\columnwidth]{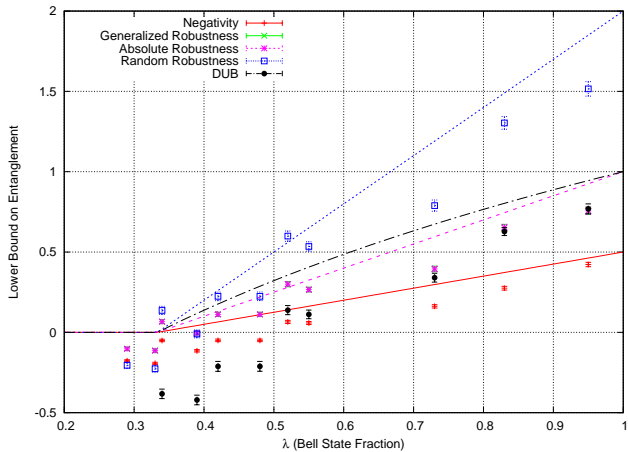}
\caption{
  \label{Fig:ExperimentalResults}
  Construction and evaluation of several entanglement measures on the experimental observations described in~{\cite{Verbanis2016}}.
  The points correspond to the values obtained by the evaluation of the MDIEW $\vec{\beta} \cdot \tilde{P}$ directly on the estimated frequencies, while the continuous line corresponds to the entanglement present in the ideal Werner state~(\ref{Eq:WernerState}).
  DUB is the upper bound on distillable entanglement due to Wang and Duan~{\cite{Wang2016}}, and all the entanglement measures are described in Appendix~\ref{App:RecipesEntanglement}.
}
\end{figure}

\section{Conclusion}

We proposed a practical method to construct quantitative measurement-device-independent entanglement witnesses purely from observational data, corresponding to lower bounds on well-known convex computable entanglement measures.
In particular, we proved the validity of earlier claims~{\cite{Verbanis2016}} related to the quantification of negativity using a resource efficient MDIEW implementation.

Compared to earlier approaches, our method does not prescribe a particular measurement basis, does not require tomographically complete sets of inputs, or even data for all input pairs.
As with other MDIEW constructions, our witnesses are robust against imperfections in the measurement devices and against classical communication, so that measurements do not need to be space-like separated.

Our numerical implementations correspond to conic linear programs, whose formulations are inspired by the disciplined convex programming approach~{\cite{Grant2006}}, where convex functions are encapsulated in components.
We identified the relevant components for the quantification of entanglement, the convex cones corresponding to entanglement measures, and presented recipes for the evaluation of main convex entanglement measures.
While we applied these formulations in semiquantum scenarios, they easily translate to other scenarios, for example device-independent entanglement quantification~{\cite{Moroder2013}} or entanglement quantification in steering scenarios~{\cite{Chen2016}}.

While we focused on the bipartite case, our results can be extended to the multipartite case, using the corresponding distributed effective POVM $\{ \Pi_{a b c \ldots} \}$ and characterizing the type of entanglement present in the recovered states $\mu_{a b c \ldots}$.
In the future, we would like to complete our library of recipes by adding more computable convex measures, for example by including the semidefinite formulations presented in~{\cite{Toth2015}}.

The construction presented in this paper rests on the extraction of entangled states from an effective POVM with entangled elements; we prove that our construction is optimal when the measurement devices perform a generalized Bell measurement.
Other constructions for MDIEWs likely exist, as hinted at by the existence of POVMs with separable elements that nevertheless require entanglement to be performed~{\cite{Bennett1999}}; we leave this study as an open question.

{\noindent}
\tmtextbf{Acknowledgments. }
We thank Jean-Daniel Bancal, Tomer Barnea, Cyril Branciard, Francesco Buscemi, Nicolas Gisin, Yeong-Cherng Liang, Martin Ringbauer and Hugo Zbinden for discussions.
Research at Perimeter Institute is supported by the Government of Canada through Industry Canada and by the Province of Ontario through the Ministry of Research and Innovation.
This publication was made possible through the support of a grant from the John Templeton Foundation; additionally, this work was supported by the Swiss National Science Foundation (Grant No. 200021\_159592), D.R. was supported by the SNSF Early Postdoc. Mobility fellowship P2GEP2\_162060, and C.C.W.L. acknowledges support from NUS start-up grant R-263-000-C78-133/731.

\paragraph*{Note added}
While completing our manuscript, we became aware of related work~{\cite{Supic2017a}} also considering the quantification of entanglement in measurement-device-independent scenarios.
We discuss the differences between our approaches in Appendix~\ref{App:Relation}.

\bibliographystyle{apsrev4-1} 
\bibliography{LongMdiew}

\appendix

\section{Proof of Proposition~\ref{Prop:OptimalMeasurements}}
\label{App:OptimalMeasurements}
The proof rests on quantum teleportation: the generalized Bell measurements allow perfect teleportation of $\rhoAB$ into the recovered states $\mu_{a b}$, up to local unitaries~\footnote{For qubits, we recover the usual Bell measurement in the definition below by taking the unitaries $U_a$, $V_b$ from the set $\{ \mathbbm{1}, \sigma_x, \sigma_y, \sigma_z \}$ containing the identity and the three Pauli matrices.}.
The elements $\{ A_a \}$ and $\{B_b \}$ of this measurement can be written~{\cite{Albeverio2002}} $A_a \equiv \ketbra{\mathcal{A}_a}$, $B_b \equiv \ketbra{\mathcal{B}_b}$ with 
\begin{equation} \label{Eq:GeneralizedBellMeasurement} 
 \ket{\mathcal{A}_a} \equiv (U_a \otimes \mathbbm{1}) \ket{\varphi_{\dX}}, \qquad \ket{\mathcal{B}_b} \equiv (\mathbbm{1} \otimes V_b) \ket{\varphi_{\dY}},
\end{equation}
where
\begin{equation}
 \ket{\varphi_d} \equiv \frac{1}{\sqrt{d}} \sum_{i = 1}^d \ket{i i}
\end{equation}
and $U_a$, $V_b$ are unitary operators for $a = 1 \ldots \nA$ and $b = 1 \ldots \nB$.
As the sets of inputs are tomographically complete, the correlations $P (a b | x y)$ are in one-to-one correspondence to this effective POVM $\{ \Pi_{a b} \}$.
The effective POVM elements~(\ref{Eq:EffectivePOVMQuantum}) are: 
\begin{equation}
 \Pi_{a b} = \frac{1}{\dX \dY} (U_a \otimes V_b) \rhoAB^{\top} (U_a^{\dag} \otimes V_b^{\dag}),
\end{equation}
and the recovered states are~(\ref{Eq:RecoveredStates}):
\begin{multline} \label{Eq:RecoveredStatesBell}
 \mu_{a b} = \frac{\Pi_{a b}^{\top}}{\tmop{tr} [\Pi_{a b}]} = \left( \overline{U}_a \otimes \overline{V}_b \right) \rhoAB \left( \overline{U}_a \otimes \overline{V}_b \right)^{\dag}, \\
  p_{a b} = \frac{1}{\dX^2 \dY^2} .
\end{multline}
In the above, we wrote $U^{\dag}$ for the conjugate transpose and $\overline{U}$ for the complex conjugate of $U$.
Then, as any entanglement measure is invariant under local unitaries, $\mathcal{E} (\mu_{a b}) =\mathcal{E} \left( \rhoAB \right)$; when the devices implement full Bell measurements on $\nA = \dX^2$ and $\nB = \dY^2$ outcomes, the bound $\nu^{\ast}$ in~(\ref{Eq:LowerBound}) is tight: $\nu^{\ast} =\mathcal{E} \left( \rhoAB \right)$.

\section{Conic linear programs}\label{App:CLP}

Linear~{\cite{Brunner2014}} and semidefinite~{\cite{Doherty2004,Navascues2008a}} programs are widely used in quantum information, and are part of the larger family of {\textit{conic linear programs}}~{\cite{Luenberger2016}}.
In their standard form, we write:
\begin{equation}  \label{Eq:CLP} 
 \arraycolsep=2pt\def\arraystretch{1.8}
 \begin{array}{ll || ll}
\multicolumn{4}{c}{ \text{{\tmstrong{Conic linear programming problem}}} } \\
 & \text{Primal} & &  \text{Dual}\\
 \text{minimize } & \vec{c}^{\top} \vec{x} & \,& \vec{b}^{\top} \vec{y}\\
 \text{over } & \vec{x} \in \mathcal{K} \subseteq \mathbbm{R}^n & \qquad & \vec{y} \in \mathbbm{R}^m\\
 \text{subject to } & A \vec{x} = \vec{b} & & \vec{c} - A^{\top} \vec{y} \in \mathcal{K}^{\ast}
 \end{array}
\end{equation}
where $\mathcal{K}$ is a convex cone with dual cone
\begin{equation}
  \mathcal{K}^{\ast} = \{ \vec{z} \in \mathbbm{R}^n \text{ s.t. } \vec{x}^\top \vec{z} \ge 0 \text{ for all } \vec{x} \in \mathcal{K} \}
\end{equation}
with $A \in \mathbbm{R}^{m \times n}$, $\vec{b} \in \mathbbm{R}^m$ and $\vec{c} \in \mathbbm{R}^n$.
In this formulation, $\vec{x}$ is a real vector; the cone of semidefinite positive complex or real matrices is handled by a representing these matrices over a real basis.
The problem is fully specified by the data $A$, $\vec{b}$, $\vec{c}$ and the structure of the cone $\mathcal{K}$.
The optimal solution is given by the pair $(\vec{x}^{\ast}, \vec{y}^{\ast})$, and $p^{\ast} = \vec{c}^{\top} \vec{x}^{\ast}$ and $d^{\ast} = \vec{b}^{\top} \vec{y}^{\ast}$ are respectively the primal and dual objective.
Weak duality states that
\begin{equation}
 \label{Eq:WeakDuality} p^{\ast} \geqslant d^{\ast}
\end{equation}
always holds.
However, when the problem is strictly feasible (Slater condition), a stronger statement holds (strong duality):
\begin{equation}
 \label{Eq:StrongDuality} p^{\ast} = d^{\ast} .
\end{equation}
The formulations presented in the Appendices~\ref{App:RecipesEntanglement} and~\ref{App:RecipesSep} maintain strict feasibility, and we always have $p^{\ast} = d^{\ast}$ in practice.

The conic linear form allows us to hide eventual complexities in the formulation.
The definition of $\mathcal{K}$ can include additional equality constraints $E \vec{x} = 0$, the declaration of additional variables (for example, when projecting semidefinite cones~{\cite{Helton2010}}); as long as the set $\mathcal{K}$ thus represented is a cone, the interpretation of~(\ref{Eq:CLP}) as a primal-dual pair holds.
In particular, the dual objective only depends on the dual variables associated with the primal constraint $A \vec{x} = \vec{b}$; and in the dual problem, $\vec{b}$ is only present in the objective.

Numerical solvers handle generally a cone $\mathcal{K}=\mathcal{K}_1 \times \mathcal{K}_2 \ldots \times \mathcal{K}_p$ that is the Cartesian product of basic supported cones.
For example, SeDuMi~{\cite{Sturm2002}}, SDPT3~{\cite{Tutuncu2003}} and Mosek~{\cite{ApS2015}} handle nonnegative, second order and semidefinite cones, while SCS~{\cite{ODonoghue2016}} also supports exponential and power cones.
However, the finer details of the reformulation can be delegated to a toolbox such as YALMIP~{\cite{Lofberg2004}} or CVX~{\cite{Grant2014}}.
These libraries allow the construction of a conic linear program using standard mathematical notation, and handle the transformation to the solver canonical input form automatically.
After handing a reformulated problem to the solver, these libraries transform back the solver output into the solution of the original problem.

Our entanglement measure cones are convenient mathematical objects which are not supported directly by YALMIP and CVX.
We thus present in Appendix~\ref{App:RecipesEntanglement} and~\ref{App:RecipesSep} formulations that are directly compatible with these libraries.
All the cones and conic linear programs presented in the Appendices map directly to the primal form of~(\ref{Eq:CLP}).

\section{Recipes: cones for entanglement measures}\label{App:RecipesEntanglement}

Let $\mathcal{E}$ be an entanglement measure satisfying axioms i and ii of Section~\ref{Sec:EntanglementMeasures}, whose the domain of validity has been extended to unnormalized states~(\ref{Eq:ExtendValidity}).
We verify that $\hat{\mathcal{E}}$ defined in~(\ref{Eq:EntanglementMeasureCone}) is a convex cone~{\cite{Boyd2004}}.
We write:
\begin{equation}
 \hat{\mathcal{E}} = \left\{ (\omega, \rho) \text{ such that } \tmop{tr} [\rho] \mathcal{E} \left( \frac{\rho}{\tmop{tr} [\rho]} \right) \leqslant \omega \right\} .
\end{equation}
Let us consider $ (\omega_1, \rho_1), (\omega_2, \rho_2) \in \hat{\mathcal{E}}$ and $\theta_1, \theta_2 \geqslant 0$; we need to show that $(\theta_1 \omega_1 + \theta_2 \omega_2, \theta_1 \rho_1 + \theta_2 \rho_2) \in \hat{\mathcal{E}}$.
With $t \equiv \tmop{tr} [\theta_1 \rho_1 + \theta_2 \rho_2]$, and using convexity of $\mathcal{E}$:
\begin{multline}
 t\mathcal{E} \left( \frac{\theta_1 \rho_1 + \theta_2 \rho_2}{t} \right) \\
 =  t\mathcal{E} \left( \frac{\tmop{tr} [\theta_1 \rho_1]}{t} \frac{\theta_1 \rho_1}{\tmop{tr} [\theta_1 \rho_1]} + \frac{\tmop{tr} [\theta_2 \rho_2]}{t} \frac{\theta_2 \rho_2}{\tmop{tr} [\theta_2 \rho_2]} \right)\\
 \leqslant \theta_1 \tmop{tr} [\rho_1] \mathcal{E} \left( \frac{\rho_1}{\tmop{tr} [\rho_1]} \right) + \theta_2 \tmop{tr} [\rho_2] \mathcal{E} \left( \frac{\rho_2}{\tmop{tr} [\rho_2]} \right)\\
 \leqslant \theta_1 \omega_1 + \theta_2 \omega_2 .
\end{multline}
All the measures mentioned in the main text use semidefinite cones, as already noted by various authors~{\cite{Brandao2004,Jafarizadeh2005,Moroder2013,Wang2016}}.
Building upon the conic linear form detailed in Appendix~\ref{App:CLP}, we provide below practical recipes to construct entanglement measure cones.
These formulations are inspired by the epigraph representations~{\cite{Grant2008}} used in disciplined convex programming, and can be entered using the standard syntax of YALMIP~{\cite{Lofberg2004}} and CVX\footnote{CVX has a particularly convenient syntax for the formulation of convex sets through \tmverbatim{cvx\_begin set sdp}.}~{\cite{Grant2014}}.
Our formulations always extend the domain of validity of the measures to unnormalized states.

Below, we have $\rhoAB \in \mathsf{Herm}_+ (\mathcal{A} \otimes \mathcal{B})$ for Hilbert spaces $\mathcal{A}$, $\mathcal{B}$ of dimension $d_{\mathcal{A}}$, $d_{\mathcal{B}}$ with $d \equiv d_{\mathcal{A}} d_{\mathcal{B}}$.
Note that the element $(\omega, \mathbbm{1}_{\mathcal{A}} \otimes \mathbbm{1}_{\mathcal{B}})$ is always in the interior of $\hat{\mathcal{E}}$ for $\omega > 0$, and thus our formulations preserve strict feasibility.

\subsection{Negativity}
\label{App:Recipe:Negativity}

The negativity $\mathcal{E}_{\text{NEG}}$ was proposed as a computable entanglement measure by Vidal and Werner~{\cite{Vidal2002}}.
The negativity is convex (but not the {\textit{logarithmic}} negativity~{\cite{Plenio2005}}) and dimension independent~{\cite{Eltschka2014}}.
For normalized states $\tmop{tr} [\rho] = 1$:
\begin{equation}
 \mathcal{E}_{\text{NEG}} (\rho) \equiv \frac{\| \rho^{\top_{\mathcal{A}}} \|_1 - 1}{2} \;,
\end{equation}
which corresponds to the absolute value of the sum of the negative eigenvalues of $\rho^{\top_\mathcal{A}}$.

Inspired by~{\cite{Moroder2013}}, we obtain a formulation using semidefinite cones by splitting $\rho^{\top_{\text{A}}} = \sigma^+ + (- \sigma^-)$, such that $\sigma^\pm$ are both semidefinite positive.
Then $\mathcal{E}_{\text{NEG}}(\rho) \le \tmop{tr}[\sigma^-]$, and the bound is tight when $\sigma^\pm$ have together the same eigenvalues as $\rho^{\top_{\text{A}}}$.
We obtain:

\begin{equation}
  \label{Eq:NegativityCone}
\arraycolsep=2pt\def\arraystretch{1.8}
 \begin{array}{ll}
\multicolumn{2}{c}{\text{{\tmstrong{Negativity cone}} } \hat{\mathcal{E}}_{\text{NEG}} } \\
 \text{We satisfy } & (\omega, \rho) \in \hat{\mathcal{E}}_{\text{NEG}}\\
 \text{iff there exists } & \omega \in \mathbbm{R}_+,\\
 & \rho, \sigma^+, \sigma^- \in \mathsf{Herm}_+ \left( \HA \otimes \HB
 \right)\\
 \text{such that } & \omega = \tmop{tr} [\sigma^-]\\
 & \rho^{\top_{\text{A}}} = \sigma^+ - \sigma^-\;.
 \end{array}
\end{equation}

\subsection{Robustness measures}

Vidal and Tarrach~{\cite{Vidal1999}} considered the robustness of $\rho$ relative to $\tau$ defined as:
\begin{equation}
 \label{Eq:Robustness} R (\rho \| \tau) = \min s \text{\quad such that \ } \rho + s \tau \in \mathsf{Sep} (\mathcal{A}: \mathcal{B}),
\end{equation}
where $\Sep{\mathcal{A}}{\mathcal{B}}$ is the cone of separable operators described in Appendix~\ref{App:RecipesSep}.
When $d \leqslant 6$, this cone is exactly described by the PPT criterion~{\cite{Horodecki1996}} (see~(\ref{Eq:SeparableConeExact}).
When $d > 6$, we replace $\mathsf{Sep} (\mathcal{A}: \mathcal{B})$ by one of the outer approximations~(\ref{Eq:SeparableConeOuter}), which provides an outer approximation of the entanglement measure cone.
When used in a conic linear program to quantify entanglement, we get a lower bound on the exact value.
From numerical evidence, we conjecture that the first level of approximation is always sufficient for the evaluation of these measures on pure states.
In Eq.~\eqref{Eq:Robustness}, we dropped the normalization factor $1 / (1 + s)$ as the separable cone is scale invariant.

Several entanglement measures can be constructed from the robustness, depending on the constraints on $\tau$.
By using a conic form and matching the original definition for $\tmop{tr} [\rho] = 1$, we extend automatically the domain of validity to $\tmop{tr} [\rho] \neq 1$.

\subsubsection{Random robustness}
\label{App:Recipe:RandomRobustness}

The random robustness~{\cite{Vidal1999}} $\mathcal{E}_{\text{RR}}$ minimizes $R (\rho \| \tau)$ with respect to $\tau = \mathbbm{1} / d$ the maximally random state.
It is not dimension independent.
\begin{equation}
 \arraycolsep=2pt\def\arraystretch{1.8}
 \begin{array}{ll}
 \multicolumn{2}{c}{\text{{\tmstrong{Random robustness cone}} } \hat{\mathcal{E}}_{\text{RR}}}\\
 \text{We satisfy } & (\omega, \rho) \in \hat{\mathcal{E}}_{\text{RR}}\\
 \text{iff there exists } & \omega \geqslant 0\\
 & \rho \in \mathsf{Herm}_+ (\mathcal{A} \otimes \mathcal{B})\\
 & \sigma \in \mathsf{Sep} (\mathcal{A}: \mathcal{B})\\
 \text{such that } & d \rho + \omega \mathbbm{1} = \sigma \; .
 \end{array}
\end{equation}

\subsubsection{Absolute robustness}
\label{App:Recipe:AbsoluteRobustness}

The absolute robustness~{\cite{Vidal1999}} $\mathcal{E}_{\text{AR}}$ is defined in a similar way; however, the minimization of the robustness is done over all separable states $\mathsf{Sep} (\mathcal{A}: \mathcal{B})$.
It is dimension independent.
In the formulation below, we substituted $\\tau = s \tau$ in the definition~(\ref{Eq:Robustness}).
\begin{equation}
 \arraycolsep=2pt\def\arraystretch{1.8}
 \begin{array}{ll}
   \multicolumn{2}{c}{\text{{\tmstrong{Absolute robustness cone}} }
   \hat{\mathcal{E}}_{\text{AR}}} \\
 \text{We satisfy } & (\omega, \rho) \in \hat{\mathcal{E}}_{\text{AR}}\\
 \text{iff there exists } & \omega \in \mathbbm{R}_+\\
 & \rho \in \mathsf{Herm}_+ (\mathcal{A} \otimes \mathcal{B})\\
 & \sigma, \tau \in \mathsf{Sep} (\mathcal{A}: \mathcal{B})\\
 \text{such that } & \omega = \tmop{tr} [\tau]\\
 & \rho + \tau = \sigma \; .
 \end{array}
\end{equation}

\subsubsection{Generalized robustness}
\label{App:Recipe:GeneralizedRobustness}

The generalized robustness is a variant introduced by Steiner~{\cite{Steiner2003}}, and minimizes the robustness over {\textit{all}} states.
It is dimension independent.
\begin{equation}
 \arraycolsep=2pt\def\arraystretch{1.8}
 \begin{array}{ll}
 \multicolumn{2}{c}{\text{{\tmstrong{Generalized robustness cone}} }\hat{\mathcal{E}}_{\text{GR}} } \\
 \text{We satisfy } & (\omega, \rho) \in \hat{\mathcal{E}}_{\text{GR}}\\
 \text{iff there exists } & \omega \in \mathbbm{R}_+\\
 & \rho, \tau \in \mathsf{Herm}_+ (\mathcal{A} \otimes \mathcal{B})\\
 & \sigma \in \mathsf{Sep} (\mathcal{A}: \mathcal{B})\\
 \text{such that } & \omega = \tmop{tr} [\tau]\\
 & \rho + \tau = \sigma \;.
 \end{array}
\end{equation}

\subsection{Semidefinite upper bound on distillable entanglement (DUB)}\label{App:Recipe:DUB}

Another entanglement measure computable using semidefinite cones was recently introduced by Wang and Duan~{\cite{Wang2016}}.
It provides an upper bound on the entanglement of distillation by LOCC~\cite{Plenio2007}.
\begin{equation}
 \arraycolsep=2pt\def\arraystretch{1.8}
 \begin{array}{ll}
 \multicolumn{2}{c}{\text{{\tmstrong{DUB cone}} } \hat{\mathcal{E}}_{\tmop{DUB}}} \\
 \text{We satisfy } & (\omega, \rho) \in \hat{\mathcal{E}}_{\text{DUB}}\\
 \text{iff there exists } & \omega \in \mathbbm{R}_+\\
 & \rho, U, V, \sigma \in \mathsf{Herm}_+ (\mathcal{A} \otimes
 \mathcal{B})\\
 \text{such that } & \omega = \tmop{tr} [U + V]\\
 & (U - V)^{\top_{\mathcal{B}}} - \rho = \sigma \;.
 \end{array}
\end{equation}
Note that the entanglement measure is not given directly by $\omega$, rather by $\log_2 \omega$.
From numerical observations, we conjecture that it is dimension independent.

\section{Recipes: separable cones}\label{App:RecipesSep}

For a bipartite Hilbert space $\tmH =\mathcal{A} \otimes \mathcal{B}$, we write $\Sep{\mathcal{A}}{\mathcal{B}}$ the set of separable operators:
\begin{equation}
  \sigma \in \Sep{\mathcal{A}}{\mathcal{B}} \qquad \Leftrightarrow \qquad \sigma = \sum_{\lambda} A_{\lambda} \otimes B_{\lambda},
\end{equation}
for some $A_{\lambda} \in \mathsf{Herm}_+ (\mathcal{A}), B_{\lambda} \in \mathsf{Herm}_+ (\mathcal{B})$.
Separable states are then represented by $\sigma \in \Sep{\mathcal{A}}{\mathcal{B}}$ with $\tmop{tr} [\sigma] = 1$ in addition.
For $\sigma \in \mathsf{Herm}_+ (\mathcal{A} \otimes \mathcal{B})$, we write $\sigma^{\top_{\mathcal{A}}}$ the partial transpose on $\mathcal{A}$.

We present now a convenient formulation of the cone $\mathsf{Sep} (\mathcal{A}: \mathcal{B})$ of separable operators on Hilbert spaces $\mathcal{A}$, $\mathcal{B}$ of finite dimension $\dA$, $\dB$, compatible with the framework of Appendix~\ref{App:CLP}.
The following semidefinite representation is exact~{\cite{Horodecki1996}} for $\dA \dB \leqslant 6$.
\begin{equation}\label{Eq:SeparableConeExact} 
 \arraycolsep=2pt\def\arraystretch{1.8}
 \begin{array}{ll}
 \multicolumn{2}{c}{\text{{\tmstrong{Separable cone}} } \mathsf{Sep} (\mathcal{A}: \mathcal{B}), \quad \dA \dB \leqslant 6 } \\
 \text{We satisfy } & \rho \in \Sep{\mathcal{A}}{\mathcal{B}}\\
 \text{iff there exists } & \rho \in \mathsf{Herm}_+ (\mathcal{A} \otimes
 \mathcal{B})\\
 & \sigma \in \mathsf{Herm}_+ (\mathcal{A} \otimes \mathcal{B})\\
 \text{such that } & \sigma^{\top_{\mathcal{A}}} = \rho \;.
 \end{array}
\end{equation}
The set of separable states has nonempty interior~{\cite{Vidal1999}}, and the same holds for $\mathsf{Sep} (\mathcal{A}: \mathcal{B})$.
Thus our formulations maintain strict feasibility.
For higher dimensions, the characterization of $\mathsf{Sep} (\mathcal{A}: \mathcal{B})$ is hard~{\cite{Gurvits2004,Hildebrand2008,Huang2014}}.
We use the semidefinite hierarchy proposed by Doherty et al.~{\cite{Doherty2004}} which represents a family of larger cones $\widetilde{\mathsf{S}}_k (\mathcal{A}: \mathcal{B})$ converging to $\mathsf{Sep} (\mathcal{A}: \mathcal{B})$ for $k \rightarrow \infty$.
These formulations still satisfy the framework of Appendix~\ref{App:CLP}.
For a symmetric extension involving $k = 2$ copies, with PPT constraints:
\begin{equation}  \label{Eq:SeparableConeOuter} 
 \arraycolsep=2pt\def\arraystretch{1.8}
 \begin{array}{ll}
 \multicolumn{2}{c}{\text{{\tmstrong{Separable cone (outer approximation)}} } \widetilde{\mathsf{S}}_2 (\mathcal{A}: \mathcal{B}) } \\
 \text{We satisfy } & \rho \in \widetilde{\mathsf{S}}_3 (\mathcal{A}: \mathcal{B})\\
 \text{iff there exists } & \tau \in \mathsf{Herm}_+ (\mathcal{A} \otimes \mathcal{B} \otimes \mathcal{B}')\\
 & \sigma_1, \sigma_2 \in \mathsf{Herm}_+ (\mathcal{A} \otimes \mathcal{B} \otimes \mathcal{B}')\\
 \text{such that } & \tmop{tr}_{\mathcal{B}'} [\tau] = \rho\\
 & \Pi_{\mathcal{B}\mathcal{B}'} \tau = \tau \Pi_{\mathcal{B}\mathcal{B}'}\\
 & \sigma_1 = \tau^{\top_{\mathcal{B}}}\\
 & \sigma_2 = \tau^{\top_{\mathcal{B}} \top_{\mathcal{B}'}}
 \end{array}
\end{equation}
where $\mathcal{B}'$ is isomorphic to $\mathcal{B}$ and $\Pi_{\mathcal{B}\mathcal{B}'}$ is the swap operator between $\mathcal{B}$ and $\mathcal{B}'$.

\section{Relation to our previous experimental work}
\label{App:RelationPreviousWork}
When using our construction with toolboxes such as CVX~{\cite{Grant2014}} and YALMIP~{\cite{Lofberg2004}}, the reformulation of the problem as a semidefinite program is done automatically.
Here, we perform this step manually and prove the validity of the program reported in the Supplemental Material of our Letter~{\cite{Verbanis2016}}, where we computed a quantitative MDIEW to lower bound the negativity present in a semiquantum scenario.
We use the recipe for the negativity entanglement cone $\hat{\mathcal{E}}_{\text{NEG}}$ given in Appendix~\ref{App:Recipe:Negativity} to express the constraint $(\omega, \rho) \in \hat{\mathcal{E}}_{\text{NEG}}$ in the entanglement quantification program~(\ref{Eq:ClaimCLPDuplicate}), introducing extra variables and constraints in the original program:
\begin{equation}
\arraycolsep=2pt\def\arraystretch{1.8}
 \label{Eq:NegativitySDP} \begin{array}{llll}
\multicolumn{2}{c}{ \text{{\tmstrong{Negativity quantification semidefinite }}}} \\[-0.8em]
\multicolumn{2}{c}{ \text{{\tmstrong{program}}(primal)}} \\
 \text{minimize } & \nu  = \frac{1}{\dX \dY} \sum_{a b} \nu_{a b}  \\
 \text{over } & \nu_{a b} \in  \mathbbm{R}_+, \forall a b\\
 & \Pi_{a b} \in  \mathsf{Herm}_+ \left( \HX \otimes \HY \right),
 \forall a b\\
 \text{{\textit{(extra)}}} & \sigma_{a b}^+, \sigma^-_{a b} \in \mathsf{Herm}_+ \left( \HX \otimes \HY \right), \forall a b\\
 \text{subject to } & \tmop{tr} [\Pi_{a b} \cdot (\xi_x \otimes \psi_y)] = P (a b | x y),\\
 \multicolumn{2}{r}{ \forall a b, (x, y) \in \mathcal{I}} \\
 \text{{\textit{(extra)}}} & \nu_{a b} = \tmop{tr} [\sigma^-_{a b}] , \forall a b\\
 & (\Pi_{a b})^{\top_{\text{X}}} = \sigma^+_{a b} - \sigma^-_{a b},  \forall a b,
 \end{array}
\end{equation}
which is equivalent (up to a factor $\dX \dY$) to the program presented in the Supplemental Material of~{\cite{Verbanis2016}}.
It is a semidefinite program in the (primal\footnote{According to the convention used in Appendix~\ref{App:CLP}.})canonical form.
By definition~(\ref{Eq:CLP}), the dual objective includes only the dual variables corresponding to primal constraints with a nonzero constant term.
Thus, the dual variables $\beta_{a b x y}$ provide a quantitative MDIEW even after reformulation; this prove the validity of the witness values reported in~{\cite{Verbanis2016}} as lower bounds on the negativity present in the setup (however, with the opposite sign convention).

\section{Relation to other previous works}
\label{App:Relation}

\subsection{Entanglement cost of nonlocal measurements}

We saw in Section~\ref{Sec:Characterizing} that a semiquantum setup is best described as a distributed quantum measurement.
A previous study of the quantity of entanglement present in a nonlocal measurement was done by Bandyopadhyay et al.~{\cite{Bandyopadhyay2009}}.
The authors defined the {\textit{entanglement cost}} $\mathcal{E}_{\text{C}} (\{ \Pi_{a b} \})$, which \ is the minimal amount of entanglement required to perform the measurement (single round) on top of local operations and classical communication.
We note that $\mathcal{E}_{\text{C}}$ is a lower bound on the entanglement $\mathcal{E} \left( \rhoAB \right)$ of the shared quantum state, as $\rhoAB$ can be used to implement $\{ \Pi_{a b} \}$ in our description.

Second, the {\textit{entanglement production capacity}} $\mathcal{E}_{\text{P}} (\{ \Pi_{a b} \})$ is the amount by which the measurement $\{ \Pi_{a b} \}$ can increase the entanglement between A and B, averaged over all measurement outcomes.
The value of $\mathcal{E}_{\text{P}}$ depends on the particular states used as inputs in the recovery protocol, and would have to be maximized over all possibilities.
In Section~\ref{Sec:Characterizing}, we described a particular protocol that produces the states $\mu_{a b}$ with average entanglement $\nu^{\ast}$ in Eq.~(\ref{Eq:LowerBound1}).
As we did not optimize over all recovery protocols, $\nu^{\ast}$ is a lower bound on the entanglement production capacity $\mathcal{E}_{\text{P}}$.
Then, as LOCC operations cannot increase entanglement, the entanglement production capacity is a lower bound on the entanglement cost:
\begin{equation}
 \nu^{\ast} \leqslant \mathcal{E}_{\text{P}} (\{ \Pi_{a b} \}) \leqslant \mathcal{E}_{\text{C}} (\{ \Pi_{a b} \}) \leqslant \mathcal{E} \left( \rhoAB
 \right),
\end{equation}
and the bound holds even if the devices are allowed classical communication.
The question of obtaining better bounds for $\mathcal{E}_{\text{C}}$ remains open.

\subsection{All entangled quantum states are nonlocal}

The construction in Section~\ref{Sec:Characterizing} is closely related to the pioneering paper by Buscemi~{\cite{Buscemi2012}} on semiquantum games.
Specifically, let us look at the Proposition 1 of~{\cite{Buscemi2012}}.
We consider the state $\sigma \in \mathsf{Herm}_+ \left( \mathcal{H}_{\text{A}'} \otimes \mathcal{H}_{\text{B}'} \right)$, and the effective POVM $\overline{\Pi}^{\sigma} = \{ \Pi_{a b} \}$ obtained by using the generalized Bell measurements~(\ref{Eq:GeneralizedBellMeasurement}) on $\sigma$.
All the states $\mu_{a b}$ recovered out of $\overline{\Pi}^{\sigma}$ can be corrected locally to $\sigma$ by Eq.~(\ref{Eq:RecoveredStatesBell}), thus $\overline{\Pi}^{\sigma}$ can be substituted for $\sigma$ in any task.
In the notation of~{\cite{Buscemi2012}}, any state $\rho$ that can realize $\overline{\Pi}^{\sigma}$ majorizes $\sigma$ ($\rho \succcurlyeq_{\text{sq}} \sigma$).
Now, all the states $\rho$ that can be transformed into $\sigma$ by LOSR maps ($\rho \rightarrowtail \sigma$ in~{\cite{Buscemi2012}}) can realize $\overline{\Pi}^{\sigma}$.
This corresponds to the trivial part of the Proposition.

Let us move to the nontrivial part.
Consider the set $\mathcal{P}_{\sigma} = \{ \vec{P} \}$ of correlations $P (a b | x y)$ that can be realized from the state $\sigma$ and shared randomness.
This set is convex by construction.
Let $\rho$ be a generic state and consider the set $\mathcal{P}_{\rho}$ defined similarly.
By convexity, the condition $\rho \succcurlyeq_{\text{sq}} \sigma$ of~{\cite{Buscemi2012}} can be understood as $\mathcal{P}_{\sigma} \subseteq \mathcal{P}_{\rho}$.
When this condition holds, we use the correspondence between the correlations $P (a b | x y)$ and $\{ \Pi_{a b} \}$: the correlations corresponding to $\overline{\Pi}^{\sigma}$ are in $\mathcal{P}_{\sigma}$, and thus in $\mathcal{P}_{\rho}$.
So $\rho$ can realize $\overline{\Pi}^{\sigma}$ and thus $\rho \rightarrowtail \sigma$.

\subsection{MDIEWs for All Entangled Quantum States}

The first constructions of measurement-device-independent entanglement witnesses were presented in~{\cite{Branciard2013,Rosset2013a}}.
These works prescribed that the devices A and B implement partial two-outcome Bell-like measurements with $A_1 = \ketbra{\varphi_{\dX}}$ and $B_1 = \ketbra{\varphi_{\dY}}$, albeit in a fixed basis.
The effective POVM element $\Pi_{11}$ satisfies~(\ref{Eq:GeneralizedBellMeasurement}) and $\nu^{\ast}$ contains the contribution of $p_{1 1} \mathcal{E} (\mu_{1 1}) = \dX^{- 1} \dY^{- 1} \mathcal{E} \left( \rhoAB \right)$.
When $\rhoAB$ is entangled, there exists an entanglement measure such that $\mathcal{E} \left( \rhoAB \right) > 0$, and thus the presence of entanglement can be certified by $\nu^{\ast} > 0$.

\subsection{Measurement-device-independent entanglement and randomness estimation in quantum networks}

The parallel work by Supic et al.~\cite{Supic2017a} also considers the quantification of entanglement in semiquantum scenarios using different measures of robustness and negativity.
Our formulations differ: we abstract over entanglement measures by using entanglement cones, while their work derives explicitly the final semidefinite programs for each of these measures.
However, the resulting optimization problems are mathematically equivalent, and will provide the same lower bounds.
In that sense, both manuscripts are complementary.
Our abstract approach is generic, and provides an operational interpretation of the quantified entanglement as entanglement recovered from the devices, whereas their work provides detailed studies of specific entanglement measures with associated lower-level semidefinite implementations.

Both works also provide additional tools.
The manuscript of Supic et al. also discusses entanglement quantification in the multi-party case and the certification of randomness.
Our approach is oriented towards pratical applications.
In tackling experimental challenges (statistical noise, losses), our conic programs have the merit of being generic, so that these challenges can be solved once for all entanglement measures.
In particular, the easy derivation of quantitative witnesses from the conic dual is invaluable to deal with probability distributions estimated from a finite number of samples, which do not respect the nonsignaling constraints in the presence of statistical noise.

\end{document}